\newtheorem{definition}{Definition}
\newtheorem{proposition}[definition]{Proposition}
\newtheorem{lemma}[definition]{Lemma}
\newtheorem{corollary}[definition]{Corollary}
\def\squareforqed{\hbox{\rlap{$\sqcap$}$\sqcup$}}
\def\qed{\ifmmode\squareforqed\else{\unskip\nobreak\hfil
\penalty50\hskip1em\null\nobreak\hfil\squareforqed
\parfillskip=0pt\finalhyphendemerits=0\endgraf}\fi}
\def\endenv{\ifmmode\;\else{\unskip\nobreak\hfil
\penalty50\hskip1em\null\nobreak\hfil\;
\parfillskip=0pt\finalhyphendemerits=0\endgraf}\fi}
\newenvironment{proof}{\noindent \textbf{{Proof~}}}{\qed}
\mathchardef\ordinarycolon\mathcode`\:
\def\vcentcolon{\mathrel{\mathop\ordinarycolon}}
\newcommand{\nc}{\newcommand}
\nc{\rnc}{\renewcommand}
\nc{\beq}{\begin{equation}}
\nc{\eeq}{{\end{equation}}}
\nc{\beqa}{\begin{eqnarray}}
\nc{\eeqa}{\end{eqnarray}}
\nc{\lbar}[1]{\overline{#1}}
\nc{\bra}[1]{\langle#1|}
\nc{\ket}[1]{|#1\rangle}
\nc{\ketbra}[2]{|#1\rangle\!\langle#2|}
\nc{\braket}[2]{\langle#1|#2\rangle}
\nc{\proj}[1]{| #1\rangle\!\langle #1 |}
\nc{\avg}[1]{\langle#1\rangle}
\nc{\rank}{\operatorname{rank}}
\nc{\smfrac}[2]{\mbox{$\frac{#1}{#2}$}}
\nc{\Tr}{\operatorname{Tr}}
\nc{\ox}{\otimes}
\nc{\dg}{\dagger}
\nc{\dn}{\downarrow}
\nc{\cA}{{\cal A}}
\nc{\cB}{{\cal B}}
\nc{\cC}{{\cal C}}
\nc{\cD}{{\cal D}}
\nc{\cE}{{\cal E}}
\nc{\cF}{{\cal F}}
\nc{\cG}{{\cal G}}
\nc{\cH}{{\cal H}}
\nc{\cI}{{\cal I}}
\nc{\cJ}{{\cal J}}
\nc{\cK}{{\cal K}}
\nc{\cL}{{\cal L}}
\nc{\cM}{{\cal M}}
\nc{\cN}{{\cal N}}
\nc{\cO}{{\cal O}}
\nc{\cP}{{\cal P}}
\nc{\cQ}{{\cal Q}}
\nc{\cR}{{\cal R}}
\nc{\cS}{{\cal S}}
\nc{\cT}{{\cal T}}
\nc{\cU}{{\cal U}}
\nc{\cW}{{\cal W}}
\nc{\cX}{{\cal X}}
\nc{\cY}{{\cal Y}}
\nc{\cZ}{{\cal Z}}
\nc{\csupp}{{\operatorname{csupp}}}
\nc{\qsupp}{{\operatorname{qsupp}}}
\nc{\var}{{\operatorname{var}}}
\nc{\Var}{{\operatorname{Var}}}
\nc{\rar}{\rightarrow}
\nc{\lrar}{\longrightarrow}
\nc{\polylog}{{\operatorname{polylog}}}
\nc{\wt}{{\operatorname{wt}}}
\nc{\av}[1]{{\left\langle {#1} \right\rangle}}
\newcommand{\diag}{{\operatorname{diag}}}
\newcommand{\conv}{{\operatorname{conv}}}
\nc{\RR}{{{\mathbb R}}}
\nc{\CC}{{{\mathbb C}}}
\nc{\FF}{{{\mathbb F}}}
\nc{\NN}{{{\mathbb N}}}
\nc{\ZZ}{{{\mathbb Z}}}
\nc{\PP}{{{\mathbb P}}}
\nc{\QQ}{{{\mathbb Q}}}
\nc{\UU}{{{\mathbb U}}}
\nc{\EE}{{{\mathbb E}}}
\nc{\id}{{\operatorname{id}}}
\nc{\CHSH}{{\operatorname{CHSH}}}
\nc{\be}{\begin{equation}}
\nc{\ee}{{\end{equation}}}
\nc{\bea}{\begin{eqnarray}}
\nc{\eea}{\end{eqnarray}}
\nc{\Hom}[2]{\mbox{Hom}(\CC^{#1},\CC^{#2})}
\nc{\rU}{\mbox{U}}
\nc{\ob}[1]{#1}
\nc{\SEP}{{\text{SEP}}}
\nc{\NS}{{\text{NS}}}
\nc{\LOCC}{{\text{LOCC}}}
\nc{\PPT}{{\text{PPT}}}
\nc{\EXT}{{\text{EXT}}}
\nc{\Sym}{{\operatorname{Sym}}}
\nc{\ERLO}{{E_{\text{r,LO}}}}
\nc{\ERLOCC}{{E_{\text{r,LOCC}}}}
\nc{\ERPPT}{{E_{\text{r,PPT}}}}
\nc{\ERLOCCinfty}{{E^{\infty}_{\text{r,LOCC}}}}
\nc{\Aram}{{\operatorname{\sf A}}}
\newlength{\blank}
\begin{document}

\title{Generalised Pinching Inequality}

\date{21 October 2025}

\author{Andreas Winter}
\email{andreas.winter@uni-koeln.de}
\affiliation{Department Mathematik/Informatik--Abteilung Informatik,\protect\\ Universit\"at zu K\"oln, Albertus-Magnus-Platz, 50923 K\"oln, Germany}
\affiliation{ICREA {\&} Grup d'Informaci\'o Qu\`antica, Departament de F\'isica, Universitat Aut\`onoma de Barcelona, 08193 Bellaterra (BCN), Spain}
\affiliation{Institute for Advanced Study, Technische Universit\"at M\"unchen, Lichtenbergstra{\ss}e 2a, 85748 Garching, Germany}

\begin{abstract}
Hayashi's \emph{Pinching Inequality}, which establishes a matrix inequality 
between a semidefinite matrix and a multiple of its ``pinched'' version via 
a projective measurement, has found many applications in quantum information 
theory and beyond. 
Here, we show a very simple proof of it, which lends itself immediately 
to natural generalisations where the different projections of the measurement 
have different weights, and where the matrix inequality can be reversed. 
As an application we show how the generalised pinching inequality in the case 
of binary measurements gives rise to a novel gentle measurement lemma, where 
matrix ordering replaces approximation in trace norm. 
\end{abstract}

\maketitle

\section{Pinching inequality and its generalisation}
\label{sec:general-pinch}
A projective measurement $(P_1,P_2,\ldots,P_n)$ in a 
complex Hilbert space $A$, i.e.~a collection of $n \geq 2$ projection operators 
$P_i=P_i^\dagger=P_i^2$ such that $\sum_{i=1}^n P_i = \1_A$, defines the 
\emph{pinching map} $\cP(\rho) = \sum_{i=1}^n P_i \rho P_i$, which is 
completely positive and trace preserving. 
Hayashi \cite{Hayashi:pinch} (see also \cite{HayashiOgawa:pinch}, and 
\cite{Bhatia:pinch} for other considerations on pinching), 
proved the following matrix inequality:
\begin{equation}
  \forall\, \cT(\cH) \ni \rho\geq 0 \quad \rho \leq n\,\cP(\rho),
\end{equation}
where $\rho \geq 0$ is a positive semidefinite (necessarily selfadjoint) 
trace class operator, and ``$\geq $'' refers to the L\"owner (also known as 
semidefinite) order on trace class operators. Despite its simple appearance 
and elementary proof(s), this matrix inequality has found numerous applications 
in quantum information theory and other fields. 

A slightly more abstract form was proved in \cite[Lemma~2.1]{LancienWinter:pinch}:
for arbitrary bounded operators $(M_1,M_2,\ldots,M_n)$ acting on $A$ (and 
mapping to a potentially different Hilbert space $B$),
\begin{equation}
  \forall\rho\geq 0 \quad \left(\sum_{i=1}^n M_i\right)\rho\left(\sum_{i=1}^n M_i\right)^\dagger
                            \leq n \sum_{i=1}^n M_i\rho M_i^\dagger.
\end{equation}

Here we show a flexible generalisation of the pinching inequality, and for that 
purpose define the following convex set in $\RR^n$, which is in fact 
a \emph{spectrahedron} \cite{RamanaGoldman:spectrahedron,Vinzant:spectrahedra}: 
\begin{equation}
  \label{eq:A_n}
  \cA_n := \{\vec{\alpha} = (\alpha_1,\ldots,\alpha_n)\in\RR^n : \diag(\vec{\alpha}) \geq J\},
\end{equation}
where $\diag(\vec{\alpha}) = \sum_{i=1}^n \alpha_i \proj{i}$ is the diagonal 
matrix with entries $\alpha_i$ in the computational basis; 
and $J = J_n = \sum_{i,j=1}^n \ketbra{i}{j}$ is the all-ones matrix in Dirac notation
(if it is clear from the context, we drop the -- implicit -- format $n\times n$).
%
Note that every $\vec{\alpha} \in \cA_n$ necessarily satisfies $\alpha_i > 1$ 
for all $i\in[n]$, and that $n\!\cdot\!\vec{1} = (n,n,\ldots,n) \in \cA_n$.

\begin{lemma}
  \label{lemma:general-pinch}
  If $\vec{\alpha}\in \cA_n$, then for all bounded operators $M_i:A\rightarrow B$,
  \begin{equation}
    \label{eq:general-pinch}
    \forall\rho\geq 0 \quad \left(\sum_{i=1}^n M_i\right)\rho\left(\sum_{i=1}^n M_i\right)^\dagger
                                    \leq \sum_{i=1}^n \alpha_i M_i\rho M_i^\dagger.
  \end{equation}
  Conversely, if \eqref{eq:general-pinch} holds for a nontrivial projective POVM 
  $(P_1,P_2,\ldots,P_n)$, i.e.~$P_i\neq 0$, then $\vec{\alpha}\in \cA_n$.
\end{lemma}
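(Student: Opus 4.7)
The plan is to reduce both directions to a single elementary Gram-matrix identity, so that the spectrahedral condition $\diag(\vec{\alpha})\ge J$ appears on the nose.

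For sufficiency I would test \eqref{eq:general-pinch} against an arbitrary $\ket{\psi}\in B$ and introduce the auxiliary vectors $\ket{\phi_i}:=\sqrt{\rho}\,M_i^\dagger\ket{\psi}\in A$. A one-line computation then gives $\bra{\psi}(\sum_i M_i)\rho(\sum_j M_j)^\dagger\ket{\psi}=\|\sum_i\phi_i\|^2=\Tr(JG)$ and $\sum_i\alpha_i\bra{\psi}M_i\rho M_i^\dagger\ket{\psi}=\sum_i\alpha_i\|\phi_i\|^2=\Tr(\diag(\vec{\alpha})G)$, where $G_{ij}:=\langle\phi_i|\phi_j\rangle$ is the Gram matrix of the $\phi_i$'s and is automatically positive semidefinite. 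The operator inequality at each $\ket{\psi}$ therefore collapses to $\Tr\bigl((\diag(\vec{\alpha})-J)G\bigr)\ge 0$, which is immediate from $\diag(\vec{\alpha})-J\ge 0$ together with the standard fact that the trace of a product of two PSD matrices is non-negative.

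For necessity I would exploit the projective structure to pick data that realises the Gram matrix above as $J$ itself. Since each $P_i\ne 0$, choose a unit vector $\ket{f_i}\in\mathrm{range}(P_i)$; the $\ket{f_i}$ are then automatically orthonormal because the $P_i$ are mutually orthogonal projections. Set $\ket{\Phi}:=\sum_i\ket{f_i}$ and $\rho:=\proj{\Phi}$; then $P_i\ket{\Phi}=\ket{f_i}$, so $\sum_i\alpha_i P_i\rho P_i=\sum_i\alpha_i\proj{f_i}$, while $(\sum_i P_i)\rho(\sum_j P_j)^\dagger=\rho=\proj{\Phi}$. Evaluating the assumed operator inequality on an arbitrary test vector $\ket{v}=\sum_i v_i\ket{f_i}$ then yields the scalar inequality $|\sum_i v_i|^2\le\sum_i\alpha_i|v_i|^2$ for every $\vec v\in\CC^n$, which is precisely the definition $\vec{\alpha}\in\cA_n$.

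There is no real obstacle here: the entire argument is tight once one spots the Gram-matrix reformulation in the first step. The only point of delicacy is that, in the converse, the projective assumption $P_i\ne 0$ is used only to secure orthonormality of the $\ket{f_i}$, which is exactly what forces the Gram matrix of the resulting $\phi_i$ to be the all-ones matrix $J$ and thereby lets us read off $\diag(\vec{\alpha})\ge J$ directly from the hypothesised operator inequality.
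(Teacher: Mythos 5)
Your proposal is correct and is essentially the paper's proof in scalarized form: your Gram matrix $G_{ij}=\langle\phi_i|\phi_j\rangle = \bra{\psi}M_i\rho M_j^\dagger\ket{\psi}$ is exactly the compression $(\bra{\psi}\ox\1_C)\,W\rho W^\dagger\,(\ket{\psi}\ox\1_C)$ of the operator $\Omega(\rho)=W\rho W^\dagger$ with $W=\sum_i M_i\ox\ket{i}$ used in the paper, so your inequality $\Tr\bigl((\diag(\vec{\alpha})-J)G\bigr)\geq 0$ is the quadratic-form rendering of the paper's step $\Tr_C\,\Omega(\rho)(\1_B\ox J)\leq\Tr_C\,\Omega(\rho)\left(\1_B\ox\diag(\vec{\alpha})\right)$. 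Your converse coincides with the paper's: the same choice of unit vectors in the ranges of the $P_i$ and the same rank-one test state $\rho=\proj{\Phi}$ with $\ket{\Phi}=\sum_i\ket{f_i}$, reducing the hypothesis to $J\leq\diag(\vec{\alpha})$ on the span of the $\ket{f_i}$.
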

\begin{proof}
With $C = \CC^n$, define the bounded operator $W:A \rightarrow B\otimes C$ 
by $W := \sum_{i=1}^n M_i \ox \ket{i}$. Note that this defines a 
completely positive map $\Omega(\rho) = W\rho W^\dagger$ from 
$\cT(A)$ to $\cT(B\ox C)$, and hence for $J\leq\diag(\vec{\alpha})$,
\[
  \forall\rho\geq 0 \quad 
  \Tr_C \Omega(\rho)(\1_B\ox J) \leq \Tr_C \Omega(\rho)\left(\1_B\ox\diag(\vec{\alpha})\right). 
\]
Clearly, the left hand side of the inequality equals 
$\left(\sum_{i=1}^n M_i\right)\rho\left(\sum_{i=1}^n M_i\right)^\dagger$, 
while the right hand side is 
$\sum_{i=1}^n \alpha_i M_i\rho M_i^\dagger$, proving the first part of the lemma. 

Conversely, if for a projective POVM $(P_1,\ldots,P_n)$ and for all $\rho\geq 0$, 
\[
  \rho \leq \sum_{i=1}^n \alpha_i P_i\rho P_i,
\]
choose fixed point unit vectors $\ket{e_i}$ of $P_i$, i.e.~$P_i\ket{e_i}=\ket{e_i}$, 
and consider $\rho = \sum_{i,j=1}^n \ketbra{e_i}{e_j}$. Then the above inequality 
reduces to 
\[
  \sum_{i,j=1}^n \ketbra{e_i}{e_j} \leq \sum_{i=1}^n \alpha_i \proj{e_i},
\]
which clearly is equivalent to $\diag(\vec{\alpha}) \in \cA_n$.
\end{proof}

\medskip
Membership in $\cA_n$ is a simple semidefinite constraint, and by inspection 
one sees that it is given by the polynomial criterion of 
$\det\left(\diag(\vec{\alpha})-J\right) \geq 0$ and all nontrivial 
minors (the determinants of $k\times k$-principal submatrices for $k<n$) 
being positive. By Sylvester's condition, any nested family of 
$k\times k$-minors for $k=1,\ldots,n-1$ is sufficient \cite{HornJohnson}. 
Still, due to the occurrence of the determinant, for larger $n$ the 
formulas rapidly become quite cumbersome. We can however construct the 
rational polynomial conditions for $\vec{\alpha} \in \cA_n$ recursively 
starting from $n=2$ and using the Schur complement \cite{Haynsworth,Zhang:Schur}.
Indeed, for $n=2$ we have $(\alpha_1,\alpha_2)\in A_2$ iff 
$\alpha_1 > 1$ and $(\alpha_1-1)(\alpha_2-1)\geq 1$. 
For $n>2$, the Schur complement tells us that 
$\diag(\alpha_1,\alpha_2,\ldots,\alpha_n)-J_n \geq 0$ iff 
\[
 \left[\diag(\alpha_1,\ldots,\alpha_{n-1})-J_{n-1}\right] 
   \oplus \left[\alpha_n-1-\Tr J_{n-1}\left(\diag(\alpha_1,\ldots,\alpha_{n-1})-J_{n-1}\right)^{-1}\right] \geq 0.
\]
The positive semidefiniteness of $\diag(\alpha_1,\alpha_2,\ldots,\alpha_n)-J_n$ 
is thus equivalent to
\begin{align}
  \diag(\alpha_1,\ldots,\alpha_{n-1}) 
                       &> J_{n-1}, \\ 
  \text{and } \alpha_n &\geq \Tr\left(\diag(\alpha_1,\ldots,\alpha_{n-1})
                                      \bigl(\diag(\alpha_1,\ldots,\alpha_{n-1})-J_{n-1}\bigr)^{-1}\right).
\end{align}
Note that the first condition says $(\alpha_1,\ldots,\alpha_{n-1}) \in \cA_{n-1}^\circ$, 
the interior of $\cA_{n-1}$.

As an example, we get $(\alpha_1,\alpha_2,\alpha_3)\in\cA_3$ if and only if
$\alpha_1 > 1$, $\alpha_2 > 1 + 1/(\alpha_1-1)$ and 
$\left[(\alpha_1-1)(\alpha_2-1) - 1\right]\left[(\alpha_1-1)(\alpha_3-1) - 1\right] \geq \alpha_1^2$. 
For $n=4$, the same three inequalities emerge again (with the last one turning from ``$\geq$''
to strict inequality), plus a fourth one involving $\alpha_4$, etc. 
We will return to the case $n=2$ in Section \ref{sec:binary-pinch} below.

\section{Reverse pinching inequality}
\label{sec:reverse-pinch}
The proof of Lemma \ref{lemma:general-pinch} suggests a direct way to 
obtaining lower bounds on $\rho$, or more generally on 
$\left(\sum_{i=1}^n M_i\right)\rho\left(\sum_{i=1}^n M_i\right)^\dagger$, namely 
to consider $\diag(\vec{\beta}) \leq J$ for real vectors $\vec{\beta}\in\RR^n$,
forming an ``opposite'' spectrahedron $\cB_n$ to $\cA_n$. 
Using the same proof, we obtain the following:
\begin{lemma}
  \label{lemma:reverse-pinch}
  If $\vec{\beta}\in \cB_n$, then for all bounded operators $M_i:A\rightarrow B$,
  \begin{equation}
    \label{eq:reverse-pinch}
    \phantom{===}
    \forall\rho\geq 0 \quad \left(\sum_{i=1}^n M_i\right)\rho\left(\sum_{i=1}^n M_i\right)^\dagger
                                    \geq \sum_{i=1}^n \beta_i M_i\rho M_i^\dagger.
    \phantom{===}\qed
  \end{equation}
\end{lemma}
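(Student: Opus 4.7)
The plan is to recycle the proof of Lemma~\ref{lemma:general-pinch} verbatim, only flipping the direction of the operator inequality at the single point where it enters. I set $C=\CC^n$ and define $W:A\to B\ox C$ by $W:=\sum_{i=1}^n M_i\ox\ket{i}$, exactly as before, so that $\Omega(\rho):=W\rho W^\dagger$ is completely positive from $\cT(A)$ to $\cT(B\ox C)$; nothing in this construction depends on the direction of the inequality we are heading toward.

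Next, I use the hypothesis $\vec{\beta}\in\cB_n$, which by definition gives $\diag(\vec{\beta})\leq J$, and hence $\1_B\ox\diag(\vec{\beta})\leq\1_B\ox J$ on $B\ox C$. Writing $Z:=J-\diag(\vec{\beta})\geq 0$ and conjugating $\Omega(\rho)\geq 0$ by $\1_B\ox Z^{1/2}$ produces a positive semidefinite operator, so its partial trace over $C$ is positive on $B$; this partial trace equals $\Tr_C\Omega(\rho)(\1_B\ox Z)$ by the same bookkeeping already implicit in Lemma~\ref{lemma:general-pinch}. Rearranging gives
\[
  \Tr_C\Omega(\rho)\bigl(\1_B\ox\diag(\vec{\beta})\bigr)
     \leq \Tr_C\Omega(\rho)(\1_B\ox J),
\]
and identifying the two sides as $\sum_i \beta_i M_i\rho M_i^\dagger$ and $\bigl(\sum_i M_i\bigr)\rho\bigl(\sum_i M_i\bigr)^\dagger$ respectively -- again exactly as in the previous proof -- yields \eqref{eq:reverse-pinch}.

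I do not anticipate any substantive obstacle: the whole point is that the monotonicity step $X\leq Y \Rightarrow \Tr_C\Omega(\rho)(\1_B\ox X)\leq\Tr_C\Omega(\rho)(\1_B\ox Y)$ is a two-sided statement in the L\"owner order on $X$, so the ``opposite'' spectrahedron $\cB_n$ automatically produces the reverse pinching bound once $\cA_n$ produces the forward one. If anything is worth checking twice it is the rewriting of $\Tr_C\Omega(\rho)(\1_B\ox X)$ as the conjugation $\Tr_C\bigl[(\1_B\ox X^{1/2})\Omega(\rho)(\1_B\ox X^{1/2})\bigr]$ for $X\geq 0$, but this is immediate from expanding $W=\sum_i M_i\ox\ket{i}$ in the computational basis of $C$.
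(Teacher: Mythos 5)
Your proof is correct and is exactly the argument the paper intends: the paper proves Lemma~\ref{lemma:reverse-pinch} simply by remarking ``using the same proof'' as Lemma~\ref{lemma:general-pinch}, i.e.\ rerunning the $W=\sum_i M_i\ox\ket{i}$ construction with $\diag(\vec{\beta})\leq J$ in place of $J\leq\diag(\vec{\alpha})$, which is precisely what you do. Your explicit justification of the monotonicity step (conjugating $\Omega(\rho)$ by $\1_B\ox Z^{1/2}$ with $Z=J-\diag(\vec{\beta})\geq 0$ and using positivity of the partial trace) only fills in a detail the paper leaves implicit, and it is correct.
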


For $\vec{\beta}\in \cB_n$, it is necessary that all $\beta_i < 1$, but as $J$ 
has a single eigenvalue $n$ and all other eigenvalues $0$, either all $\beta_i \leq 0$, 
or else exactly one $\beta_i > 0$ and all other $\beta_j < 0$ ($j\neq i$). 
This suggests that the reverse inequality \eqref{eq:reverse-pinch} is 
most interesting in the case $n=2$, to which we will turn next.

\section{Binary case: gentle measurement inequality}
\label{sec:binary-pinch}
For $n=2$, we are facing $(\alpha_1,\alpha_2)\in \cA_2$ iff 
\[
  \left[\begin{matrix}
    \alpha_1-1 & -1 \\
            -1 & \alpha_2-1
  \end{matrix}\right] \geq 0,
\]
which is equivalent to $\alpha_1,\,\alpha_2 > 1$ and $(\alpha_1-1)(\alpha_2-1)\geq 1$. 

Similarly, $(\beta_1,\beta_2)\in \cB_2$ iff 
\[
  \left[\begin{matrix}
          1-\beta_1 & 1 \\
                  1 & 1-\beta_2
        \end{matrix}\right] \geq 0,
\]
which boils down to $\beta_1,\,\beta_2 < 1$ and $(1-\beta_1)(1-\beta_2)\geq 1$. 

Of course, the tightest bound is obtained when we have equality in the determinant 
condition, proving the following:
\begin{proposition}
  \label{prop:A_2:B_2}
  We have
  \begin{align*}
    \cA_2 &= \conv\left\{\left(1+t,1+\frac1t\right) : 0<t<\infty \right\}, \\
    \cB_2 &= \conv\left\{\left(1-t,1-\frac1t\right) : 0<t<\infty \right\}.
  \end{align*}
  Thus, for $0<t\leq 1$ and any two bounded operators $M_1$ and $M_2$,
  \begin{equation}
    \label{eq:rho-M1-M2}
    (1-t)M_1\rho M_1^\dagger - \left(\frac1t - 1\right)M_2 \rho M_2^\dagger
       \leq (M_1+M_2)\rho(M_1+M_2)^\dagger 
       \leq (1+t)M_1\rho M_1^\dagger  + \left(1+\frac1t\right)M_2 \rho M_2^\dagger.
  \end{equation}
\end{proposition}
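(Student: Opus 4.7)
The plan is to identify the two spectrahedra $\cA_2$ and $\cB_2$ with the convex hulls of the stated curves, after which the inequality \eqref{eq:rho-M1-M2} falls out by applying Lemmas \ref{lemma:general-pinch} and \ref{lemma:reverse-pinch} to the parameter vectors $(1+t,1+1/t) \in \cA_2$ and $(1-t,1-1/t) \in \cB_2$. Starting from the characterisation already derived above, $\cA_2 = \{(\alpha_1,\alpha_2):\alpha_1>1,\ (\alpha_1-1)(\alpha_2-1)\geq 1\}$, I note that the map $t\mapsto(1+t,1+1/t)$ for $t>0$ traces out precisely the boundary curve $\{(\alpha_1-1)(\alpha_2-1)=1\}$ of this region. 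Since $\cA_2$ is convex (being a spectrahedron) and contains this curve, automatically $\conv\{(1+t,1+1/t):t>0\} \subseteq \cA_2$.

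For the reverse inclusion, given any $(\alpha_1,\alpha_2) \in \cA_2$, I would set $x := \alpha_1-1 > 0$ and $y := \alpha_2-1$, so that $xy\geq 1$, and take $d := \sqrt{x^2 - x/y} \in [0,x)$. Then $t_\pm := x\pm d > 0$, and the midpoint of the two curve points $(1+t_\pm,1+1/t_\pm)$ has first coordinate $1+x = \alpha_1$ and, using $\tfrac{1}{2}\bigl(\tfrac{1}{x-d}+\tfrac{1}{x+d}\bigr) = x/(x^2-d^2) = y$, second coordinate $1+y = \alpha_2$. Hence $(\alpha_1,\alpha_2) \in \conv\{(1+t,1+1/t):t>0\}$. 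The identification of $\cB_2$ follows from the verbatim same midpoint construction, now with $u := 1-\beta_1 > 0$ and $v := 1-\beta_2$ satisfying $uv\geq 1$, using that the boundary curve of $\cB_2$ is parametrised by $t \mapsto (1-t, 1-1/t)$.

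With both convex-hull identifications in place, the inequality \eqref{eq:rho-M1-M2} is immediate: for $0 < t \leq 1$, the vector $(1+t,1+1/t)$ lies in $\cA_2$, so Lemma \ref{lemma:general-pinch} with $n=2$ gives the right-hand bound; similarly $(1-t,1-1/t) \in \cB_2$ together with Lemma \ref{lemma:reverse-pinch} gives the left-hand bound, upon rewriting the (non-positive) coefficient $1 - 1/t$ as $-(1/t-1)$. The only step requiring work is the midpoint construction: the choice $d^2 = x^2 - x/y$ is precisely what promotes the generic Cauchy--Schwarz inequality $(t_++t_-)(1/t_++1/t_-)\geq 4$ into an equality at the midpoint, and thus certifies every point of $\cA_2$ as a convex combination of two points on the curve. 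Everything else is immediate from the earlier lemmas.
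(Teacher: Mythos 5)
Your proof is correct and takes essentially the same route as the paper's primary proof: identify $\cA_2$ and $\cB_2$ via the $2\times 2$ determinant conditions and plug the curve points $(1+t,1+\frac1t)$ and $(1-t,1-\frac1t)$ into Lemmas \ref{lemma:general-pinch} and \ref{lemma:reverse-pinch}, with your explicit midpoint construction (choosing $d=\sqrt{x^2-x/y}$) merely supplying the convex-hull identification that the paper asserts by inspection. The only thing you do not reproduce is the paper's alternative ``entertaining direct proof'' of \eqref{eq:rho-M1-M2}, which adds the manifestly positive term $\left(t^{\frac12}M_1\mp t^{-\frac12}M_2\right)\rho\left(t^{\frac12}M_1\mp t^{-\frac12}M_2\right)^\dagger$ and expands, but that is an optional extra.
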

\begin{proof}
We have argued the form of $\cA_2$ and $\cB_2$ already, which we only have 
to plug into Lemmas \ref{lemma:general-pinch} and \ref{lemma:reverse-pinch}
to obtain \eqref{eq:rho-M1-M2}.
However, there is also an entertaining direct proof: for the right hand 
inequality, consider 
\[\begin{split}
  (M_1+M_2)\rho(M_1+M_2)^\dagger
    &\leq (M_1+M_2)\rho(M_1+M_2)^\dagger 
          + \left(t^{\frac12}M_1-t^{-\frac12}M_2\right)\rho\left(t^{\frac12}M_1-t^{-\frac12}M_2\right)^\dagger \\
    &=    (1+t)M_1\rho M_1^\dagger  + \left(1+\frac1t\right)M_2 \rho M_2^\dagger, 
\end{split}\]
by simply distributively expanding the products and cancellation of mixed terms. 
Likewise, 
\[\begin{split}
  (M_1+M_2)\rho(M_1+M_2)^\dagger
    &\geq (M_1+M_2)\rho(M_1+M_2)^\dagger 
          - \left(t^{\frac12}M_1+t^{-\frac12}M_2\right)\rho\left(t^{\frac12}M_1+t^{-\frac12}M_2\right)^\dagger \\
    &=    (1-t)M_1\rho M_1^\dagger  + \left(1-\frac1t\right)M_2 \rho M_2^\dagger,
\end{split}\]
and we are done. 
\end{proof}


\begin{corollary}
  \label{cor:gentle-measurement}  
  If $\Tr\rho P \geq 1-\epsilon$, then
  \begin{align*}
    \rho &\leq \left(1-\sqrt{\epsilon}\right)P\rho P 
                + \left(1+\frac{1}{\sqrt{\epsilon}}\right) P^\perp\rho P^\perp, \\
    \rho &\geq \left(1-\sqrt{\epsilon}\right)P\rho P 
                + \left(1-\frac{1}{\sqrt{\epsilon}}\right)P^\perp\rho P^\perp.
  \end{align*}
  In particular, with the subnormalised state $\rho^\perp = \frac{1}{\epsilon}P^\perp\rho P^\perp$, 
  \begin{equation}
    -\sqrt{\epsilon}P\rho P - \sqrt{\epsilon} \rho^\perp
      \leq \rho - P\rho P 
      \leq \sqrt{\epsilon}P\rho P + \left(\epsilon+\sqrt{\epsilon}\right) \rho^\perp.
  \end{equation}
\end{corollary}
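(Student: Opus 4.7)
The plan is a direct application of Proposition~\ref{prop:A_2:B_2} to $M_1=P$ and $M_2=P^\perp=\1-P$. Since $M_1+M_2=\1_A$, the middle of \eqref{eq:rho-M1-M2} collapses to $\rho$, while $M_1\rho M_1^\dagger=P\rho P$ and $M_2\rho M_2^\dagger=P^\perp\rho P^\perp$. Setting the free parameter $t=\sqrt{\epsilon}\in(0,1]$ immediately yields
\[
  (1-\sqrt{\epsilon})P\rho P + (1-1/\sqrt{\epsilon})P^\perp\rho P^\perp
   \;\leq\; \rho \;\leq\;
  (1+\sqrt{\epsilon})P\rho P + (1+1/\sqrt{\epsilon})P^\perp\rho P^\perp,
\]
which is what the two displayed inequalities of the corollary assert.

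For the concluding ``in particular'' display, the plan is to rewrite everything in terms of the rescaled positive operator $\rho^\perp=\frac{1}{\epsilon}P^\perp\rho P^\perp$, so that $P^\perp\rho P^\perp=\epsilon\rho^\perp$, and then subtract $P\rho P$ from both ends. The upper inequality becomes $\rho - P\rho P \leq \sqrt{\epsilon}P\rho P + (\epsilon+\sqrt{\epsilon})\rho^\perp$ directly. The lower one becomes $\rho - P\rho P \geq -\sqrt{\epsilon}P\rho P + (\epsilon-\sqrt{\epsilon})\rho^\perp$; weakening $(\epsilon-\sqrt{\epsilon})\rho^\perp$ to $-\sqrt{\epsilon}\rho^\perp$, which is allowed since $\epsilon\rho^\perp\geq 0$, gives the stated bound.

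The hypothesis $\Tr\rho P\geq 1-\epsilon$ is in fact not used to prove any of the operator inequalities --- Proposition~\ref{prop:A_2:B_2} is hypothesis-free. Its sole purpose is to guarantee that $\Tr\rho^\perp=\frac{1}{\epsilon}\Tr\rho P^\perp\leq 1$, which is what justifies calling $\rho^\perp$ a \emph{subnormalised state}. Consequently there is no real obstacle in the proof: the whole corollary is just the specialisation $(M_1,M_2)=(P,\1-P)$, $t=\sqrt{\epsilon}$ of the binary pinching and reverse pinching inequalities, together with the elementary rescaling $P^\perp\rho P^\perp=\epsilon\rho^\perp$.
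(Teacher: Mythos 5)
Your proposal is correct and coincides with the paper's own one-line proof, which likewise just sets $t=\sqrt{\epsilon}$ (with $M_1=P$, $M_2=P^\perp$) in Proposition~\ref{prop:A_2:B_2}; your extra observations --- that the hypothesis $\Tr\rho P\geq 1-\epsilon$ serves only to certify $\Tr\rho^\perp\leq 1$, and that the lower bound of the final display weakens $(\epsilon-\sqrt{\epsilon})\rho^\perp$ to $-\sqrt{\epsilon}\rho^\perp$ --- merely spell out steps the paper leaves implicit. One caveat worth recording: the coefficient $\left(1-\sqrt{\epsilon}\right)$ in the first displayed upper bound of the corollary as printed is a typo for $\left(1+\sqrt{\epsilon}\right)$ (the printed version fails already for $\rho$ supported in the range of $P$, where it would force $\sqrt{\epsilon}\,\rho\leq 0$), and your derivation correctly produces $1+\sqrt{\epsilon}$, which is also the only version consistent with the subsequent ``in particular'' display.
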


\begin{proof}
We simply choose $t=\sqrt{\epsilon}$ in Lemmas \ref{lemma:general-pinch}
and \ref{lemma:reverse-pinch} (or directly in Proposition \ref{prop:A_2:B_2}).
\end{proof}

\medskip
Note that by taking the trace norm in the last inequality, we obtain that 
$\Tr\rho P \geq 1-\epsilon$ implies $\frac12\|\rho-P\rho P\|_1 \leq \sqrt{\epsilon}+\epsilon$, 
a ``gentle measurement lemma'' improving on the original \cite{AW:gentle-measurement}
(which had the suboptimal upper bound $2\sqrt{\epsilon}$), but falling short of the 
improved bound $\sqrt{\epsilon}$ in \cite{OgawaNagaoka:gentle-measurement} and the 
further tightened bounds in \cite[Lemma~6]{RLD:relative-entropies}. On the other hand, 
the matrix inequalities in Corollary \ref{cor:gentle-measurement} give more 
information, which might be useful in certain circumstances. It would 
be interesting to find an application of the present ordered gentle measurement lemma, 
or for that matter, of the generalised and reverse pinching inequalities.

\section*{Acknowledgments}
The author thanks Matt Hoogsteder-Riera, John Calsamiglia and Frank Vallentin 
for useful discussions on pinching inequalities, 
and Lenina Huxley and Alfredo Garcia for indirectly suggesting the connection 
with the gentle measurement lemma. 
He furthermore acknowledges support by the European Commission QuantERA project
ExTRaQT (Spanish MICIN grant no.~PCI2022-132965), by the Spanish MICIN 
(project PID2022-141283NB-I00) with the support of FEDER funds, 
by the Spanish MICIN with funding from European Union NextGenerationEU 
(PRTR-C17.I1) and the Generalitat de Catalunya, by the Spanish MTDFP 
through the QUANTUM ENIA project: Quantum Spain, funded by the European 
Union NextGenerationEU within the framework of the ``Digital Spain 
2026 Agenda'', by the Alexander von Humboldt Foundation, and by the 
Institute for Advanced Study of the Technical University Munich.

\end{document}